\newtheorem{theorem}{Theorem}
\begin{document}
\bibliographystyle{ieeetr}

\title{On Degrees of Freedom Region of MIMO Networks without CSIT}
\author{\authorblockN{Chiachi Huang, Syed A. Jafar, Shlomo Shamai (Shitz), Sriram Vishwanath}
\thanks{C. Huang was with Department of Electrical Engineering and Computer Science,
University of California, Irvine, California, USA while this work was done.
He is now with Department of Communications Engineering, Yuan Ze University, Chongli, Taiwan
({\small e-mail: chiachi@saturn.yzu.edu.tw}).
S. A. Jafar is with Department of Electrical Engineering and Computer Science,
University of California, Irvine, California, USA ({\small e-mail: syed@uci.edu}).
S. Shamai is with Department of Electrical Engineering, Technion-Israel Institute of Technology,
Technion City, Haifa, Israel
({\small e-mail: sshlomo@ee.technion.ac.il}).
The work of S. Shamai was supported by the European Commission in
the framework of the FP7 Network of Excellence in Wireless Communications
NEWCOM++.
S. Vishwanath is with Department of Electrical Engineering, University of Texas, Austin, Texas, USA
({\small e-mail: sriram@ece.utexas.edu}).
}}
\maketitle

\begin{abstract}
In this paper, we study the effect of the absence of channel knowledge for multiple-input-multiple-output
(MIMO) networks. Specifically, we assume perfect channel state information at the receivers, no channel
state information at the transmitter(s), and independent identically distributed (i.i.d.) Rayleigh fading across antennas, users and time slots. We provide the characterization of the degrees of freedom (DoF)
region for a 2-user MIMO broadcast channel. We then provide a DoF region outer bound for a 2-user MIMO
interference channel. This bound is shown to be tight for all possible combinations of the number of
antennas at each node except for one case. As a byproduct of this analysis we point out the potential of interference alignment in the 2-user MIMO interference channel with no CSIT.
\end{abstract}

\begin{keywords}
Broadcast channel, degrees of freedom, interference channel, multiple-input-multiple-output (MIMO).
\end{keywords}

\newpage
\section{Introduction}
Multiple-input-multiple-output (MIMO) systems are capable of providing remarkably
higher capacity compared to traditional single-input-single-output (SISO) systems. One of the key features of
MIMO systems is the possibility of multiplexing signals in space. The ability of multiplexing signals
in space is measured by spatial multiplexing gain \cite{Zheng_Tse}, which is also called capacity prelog or
degrees of freedom (DoF). For point-to-point MIMO communication systems, it has been shown that the
availability of the channel state information at the transmitter (CSIT) does not affect the spatial
multiplexing gain \cite{Telatar}.
Unlike the point-to-point case, in a network with distributed processing units, references
\cite{Caire_Shamai_CSI, Caire_Shamai_ITrans, Jafar_scalar, Jafar_mobile}, and \cite{Lapidoth} show that in the absence of channel knowledge, spatial
multiplexing gain may be lost. For example, the DoF of the fading multiple-input-single-output (MISO) broadcast
channel with $M$ antennas at the transmitter and $1$ antenna at all $M$ receivers is $M$ when perfect
channel state information is available at the transmitter (perfect CSIT)
\cite{Yu_Cioffi,Viswanath_Tse_BC,Vishwanath_Jindal_Goldsmith}. However, the DoF of the same system is
only $1$ when channel state information is not available at the transmitter (no CSIT) \cite{Jafar_scalar}.
Further understanding about the availability of channel knowledge and its effect on the degrees freedom
of the networks can provide insights into the design and optimization of the wireless networks.

In this paper, we first study the DoF region of a 2-user MIMO broadcast channel where the transmitter
is equipped with $M$ antennas and receivers $1,2$ are equipped with $N_1,N_2$ antennas, respectively,
under the assumption of perfect channel state information at the receivers (perfect CSIR) and no CSIT. The channel model assumed throughout this paper is Rayleigh fading that is i.i.d. across antennas, users and time slots.
An exact characterization of the DoF region is given,
and our result shows that a simple time division scheme between the two users is DoF-region optimal.
We also extend the outer bound of the DoF region to find the capacity region for a specific 2-user
MIMO broadcast channel.
Under the same assumptions of perfect CSIR and no CSIT,
we then generalize the outer bound derived for the 2-user MIMO broadcast channel to provide a
DoF-region outer bound for a 2-user MIMO interference channel where transmitters 1 and 2 are
equipped with $M_1, M_2$ antennas respectively and receivers 1 and 2 are equipped with
$N_1, N_2$ antennas respectively.
The outer bound is found to be tight in all cases except for one special
class of channels where the DoF region remains unknown. The loss of DoF
due to lack of CSIT depends on the relative magnitudes of
$M_1, M_2, N_1, N_2$, and is found to be more severe when transmitters
carry more antennas than receivers and less severe when receivers carry
more antennas than receivers. For example, if transmitters have more antennas than their corresponding receivers, $M_1\geq N_1$ and $M_2\geq N_2$
then the benefits of multiplexing are lost to the extent that time-division between the two users is DoF-region optimal. However, if the receivers have more antennas than their \emph{interfering} transmitters, $N_1\geq M_2$ and $N_2\geq M_1$, then there is no loss in DoF relative to the case where perfect CSIT is available.  For the class of channels where the DoF-region remains open, the possibility of interference alignment is seen to be the principal challenge.


The rest of the paper is organized as follows. Section \ref{sec:systemmodel} describes the models.
Section \ref{sec:rslts} summarizes our main results.
In Section \ref{sec:prf}, we prove all the theorems given in Section \ref{sec:rslts}.
Section \ref{sec:cpcty_bc} studies the capacity region of a specific 2-user MIMO broadcast channel.
Section \ref{sec:cnlsn} concludes the paper.

Regarding notation, we use ${\bf V}_i$ to denote the $i^{th}$ element of a vector ${\bf V},$
and ${\bf V}_{i:j}$ for the elements that start from element $i$ and end at element $j.$
We define ${\bf V}_{i:j} = \phi$ if $i > j.$
We say $f(x)=o(g(x))$ if $\lim_{x \rightarrow \infty} \frac{f(x)}{g(x)} = 0.$

\section{System Model}
\label{sec:systemmodel}
\subsection{2-User MIMO Broadcast channel}
\label{sec:bcmodel}
Consider the $2$-user Gaussian MIMO broadcast channel where the transmitter is equipped with $M$ antennas
and receivers $1,2$ are equipped with $N_1,N_2$ antennas, respectively. The channel is described by the
input-output relationship:
\begin{eqnarray}
{\bf Y}^{[1]}(t)&=&{\bf H}^{[1]}(t){\bf X}(t)+{\bf Z}^{[1]}(t)\\
{\bf Y}^{[2]}(t)&=&{\bf H}^{[2]}(t){\bf X}(t)+{\bf Z}^{[2]}(t)
\end{eqnarray}
where at the $t^{th}$ channel use, ${\bf Y}^{[i]}(t)$, ${\bf Z}^{[i]}(t)$ are the $N_i\times 1$ vectors
representing the channel output and additive white Gaussian noise at receiver $i$, ${\bf H}^{[i]}(t)$ is
the $N_i\times M$ channel matrix corresponding to receiver $i$, $i\in\{1,2\}$, and ${\bf X}(t)$ is the
$M\times 1$ input vector. The elements of ${\bf H}^{[i]}(t)$ and ${\bf Z}^{[i]}(t)$, $i=1,2$, are
independent and identically distributed (i.i.d.) (both across space and time) circularly symmetric complex Gaussian random variables with zero
mean and unit variance. We assume perfect CSIR, i.e., each receiver has perfect knowledge of all channel
matrices at each instant, and no CSIT, i.e., the transmitter has no knowledge of the instantaneous values
taken by the channel coefficients. To avoid cumbersome notation, we will henceforth suppress the channel
use index $t$ where it does not cause ambiguity.

The transmit power constraint is expressed as:
\begin{eqnarray}
\mbox{E}[||{\bf X}||^2]\leq P.
\end{eqnarray}
There are two independent messages $W_1,W_2$, associated with rates $R_1,R_2$, to be communicated from the transmitter to receivers $1,2,$ respectively. The capacity region $C(P)$ is the set of all rate pairs $(R_1, R_2)$ for which the probability of error can be driven arbitrarily close to zero by using suitably long codewords. The degrees of freedom region is defined as follows:
\begin{equation}
\mathbf{D}\triangleq \left\{(d_1,d_2)\in\mathbb{R}_2^+: \exists (R_1(P), R_2(P))\in C(P) ~~ \mbox{ s.t.} ~~ d_i=\lim_{P\rightarrow\infty}\frac{R_i(P)}{\log(P)}, ~~ i=1,2.\right\}.
\end{equation}

\subsection{2-user MIMO Interference Channel}
\label{sec:icmodel}
Consider the $2$-user Gaussian MIMO interference channel where transmitters $1,2$ are equipped with
$M_1,M_2$ antennas, respectively, and receivers $1,2$ are equipped with $N_1,N_2$ antennas, respectively.
Without loss of generality, we assume that $N_1 \leq N_2.$ The channel is described by the input-output
relationship:
\begin{eqnarray}
{\bf Y}^{[1]}(t)
& = &
{\bf H}^{[11]}(t){\bf X}^{[1]}(t)+{\bf H}^{[12]}(t){\bf X}^{[2]}(t)+{\bf Z}^{[1]}(t)\\
{\bf Y}^{[2]}(t)
& = &
{\bf H}^{[21]}(t){\bf X}^{[1]}(t)+{\bf H}^{[22]}(t){\bf X}^{[2]}(t)+{\bf Z}^{[2]}(t)
\end{eqnarray}
where at the $t^{th}$ channel use, ${\bf Y}^{[j]}(t)$, ${\bf Z}^{[j]}(t)$ are the $N_j\times 1$ vectors
representing the channel output and additive white Gaussian noise at receiver $j$, ${\bf H}^{[ji]}(t)$ is
the $N_j\times M_i$ channel matrix corresponding to receiver $j$, and ${\bf X}^{[i]}(t)$ is the
$M_i\times 1$ input vector, $i,j\in\{1,2\}$. The following assumptions are similar to those in Section
\ref{sec:systemmodel}. The elements of ${\bf H}^{[ji]}(t)$ and ${\bf Z}^{[j]}(t)$, $i,j\in\{1,2\}$,
are i.i.d. (across space and time) circularly symmetric complex Gaussian random variables with
zero mean and unit variance. We assume perfect CSIR and no CSIT.

The transmit power constraint is expressed as:
\begin{eqnarray}
\mbox{E}[||{\bf X}^{[i]}||^2]\leq P, ~~~i=1,2.
\end{eqnarray}
There are two independent messages $W_1,W_2$, associated with rates $R_1,R_2$, to be communicated from the transmitter $1$ to receiver $1$ and from the transmitter $2$ to receiver $2$, respectively. The standard definitions of the capacity region and the DoF region are the same with those in the previous section.

\section{Main Results}
\label{sec:rslts}
We provide our main results in this section. We start with the broadcast channel.

\subsection{2-User MIMO Broadcast Channel}
\label{sec:bc_rslts}
\begin{theorem}
\label{thm:dofbc}
The DoF region of the MIMO broadcast channel with no CSIT, as defined in Section \ref{sec:bcmodel}, is
the following:
\begin{equation}
{\bf D} =
\left\{ (d_1,d_2)\in\mathbb{R}_2^+: \frac{d_1}{\min({M,N_1})}+\frac{d_2}{\min({M,N_2})}\leq 1 \right\}.
\end{equation}
\end{theorem}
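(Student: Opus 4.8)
The plan is to prove the two inclusions separately; achievability is immediate and the converse carries essentially all the weight. For the inner bound, setting $d_2=0$ reduces the problem to a point-to-point $M\times N_1$ MIMO link, which achieves $\min(M,N_1)$ degrees of freedom even without CSIT, so the corner point $(\min(M,N_1),0)$ is achievable; symmetrically so is $(0,\min(M,N_2))$. Every point of the claimed region is a convex combination of $(0,0)$ and these two corner points, realized by simply time-sharing between the two single-user schemes, so the region in the theorem is contained in $\mathbf{D}$.

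For the outer bound, assume without loss of generality $N_1\le N_2$ and abbreviate $N_i^\star=\min(M,N_i)$, so $N_1^\star\le N_2^\star$. Let $\mathcal{H}$ denote the collection of all channel matrices over a block of length $n$, fix a reliable code, and let $\epsilon_n\to0$ absorb Fano's inequality. Hand receiver $2$ the genie message $W_1$. Since there is no CSIT, $\mathbf{X}^n$ is a deterministic function of $(W_1,W_2)$, so $h(\mathbf{Y}^{[2],n}\mid W_1,W_2,\mathcal{H})=h(\mathbf{Z}^{[2],n})=n\cdot o(\log P)$, and with Fano this yields $nR_2\le h(\mathbf{Y}^{[2],n}\mid W_1,\mathcal{H})+n\cdot o(\log P)$. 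On the other side, the standard single-user bound $h(\mathbf{Y}^{[1],n}\mid\mathcal{H})\le n\,N_1^\star\log P+n\cdot o(\log P)$ together with Fano gives $nR_1\le n\,N_1^\star\log P-h(\mathbf{Y}^{[1],n}\mid W_1,\mathcal{H})+n\cdot o(\log P)$. Dividing the second estimate by $N_1^\star$ and the first by $N_2^\star$, adding, dividing by $n\log P$, and letting $n\to\infty$ and then $P\to\infty$, the desired bound $\frac{d_1}{N_1^\star}+\frac{d_2}{N_2^\star}\le1$ follows provided one establishes the single inequality
\begin{equation}
\frac{1}{N_2^\star}\,h(\mathbf{Y}^{[2],n}\mid W_1,\mathcal{H})\ \le\ \frac{1}{N_1^\star}\,h(\mathbf{Y}^{[1],n}\mid W_1,\mathcal{H})+n\cdot o(\log P).
\end{equation}

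This inequality is the heart of the argument, and proving it is the main obstacle. The idea is to exploit the statistical symmetry forced by the absence of CSIT. If $N_2>M$, then given any $M$ of receiver $2$'s antenna outputs the remaining ones are (almost surely) determined up to additive Gaussian noise, so discarding them changes $h(\mathbf{Y}^{[2],n}\mid W_1,\mathcal{H})$ by at most $n\cdot o(\log P)$; this reduces matters to a receiver $2$ with exactly $N_2^\star$ antennas, and the analogous remark for receiver $1$ shows that when $N_1>M$ the inequality degenerates to the trivial total-dimension bound $d_1+d_2\le M$. Now, because the entries of each channel matrix are i.i.d., for every size-$N_1^\star$ subset $S$ of receiver $2$'s antennas the conditional law of $\{\mathbf{Y}^{[2],n}_i\}_{i\in S}$ given $(W_1,\mathcal{H})$ depends on $\mathcal{H}$ only through the rows of $\mathbf{H}^{[2]}$ indexed by $S$, which form an i.i.d. Gaussian matrix with the same distribution as $\mathbf{H}^{[1]}$; hence $h(\mathbf{Y}^{[2],n}_S\mid W_1,\mathcal{H})=h(\mathbf{Y}^{[1],n}\mid W_1,\mathcal{H})$ for every such $S$. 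Applying Han's inequality for differential entropy to the $N_2^\star$ output coordinates of receiver $2$ at subset size $N_1^\star$, and using this common value, gives exactly $\frac{1}{N_2^\star}h(\mathbf{Y}^{[2],n}\mid W_1,\mathcal{H})\le\frac{1}{N_1^\star}h(\mathbf{Y}^{[1],n}\mid W_1,\mathcal{H})$. The delicate point is to arrange the conditioning so that receiver $1$ is genuinely statistically indistinguishable from a generic $N_1^\star$-dimensional slice of receiver $2$ — in particular checking that this survives conditioning on the genie $W_1$ and on all channel matrices — after which the antenna reductions for $N_i>M$ and the appeal to Han's inequality are routine.
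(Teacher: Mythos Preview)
Your argument is correct, but it follows a genuinely different route from the paper's own proof. The paper exploits that the i.i.d.-fading BC is stochastically degraded and works with the single-letter capacity description $R_1\le I(U;{\bf Y}^{[1]}\mid{\bf H})$, $R_2\le I({\bf X};{\bf Y}^{[2]}\mid U,{\bf H})$; it introduces the prelog $r$ of $I({\bf X};{\bf Y}^{[1]}\mid U,{\bf H})$, uses an antenna-by-antenna chain rule together with exchangeability to show that each individual antenna of receiver~1 carries at most $r/N_1$ DoF of information about ${\bf X}$ given $U$, and then rebuilds $I({\bf X};{\bf Y}^{[2]}\mid U,{\bf H})$ one antenna at a time, replacing the first $N_1$ outputs by ${\bf Y}^{[1]}$ and bounding each remaining one by the per-antenna quantity. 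You instead stay multi-letter, give receiver~2 the genie $W_1$, and compress the whole antenna comparison into a single application of Han's inequality at subset size $N_1^\star$, using the exchangeability to identify every $N_1^\star$-subset of receiver~2's outputs with receiver~1's output. Both proofs rest on the same structural fact (no CSIT plus i.i.d.\ fading makes receive antennas statistically interchangeable); the paper's chain-rule bookkeeping is what Han's inequality packages in one line. What each buys: the paper's single-letter route, with the auxiliary $U$ in place, is what allows the later extension to an exact finite-SNR capacity region for the orthonormal-row model in Section~\ref{sec:cpcty_bc}; your route does not invoke degradedness at all and therefore ports without modification to the interference channel---indeed the paper's proof of Theorem~\ref{thm:dofic_out} abandons the degraded-BC framework and argues multi-letter with the genie $W_1$, essentially your setup but still with chain rule in place of Han.
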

The proof of this theorem is provided in Section \ref{sec:prf}.

\begin{figure}
\begin{center}
\includegraphics[width=400pt, trim=0 160 0 120, clip]{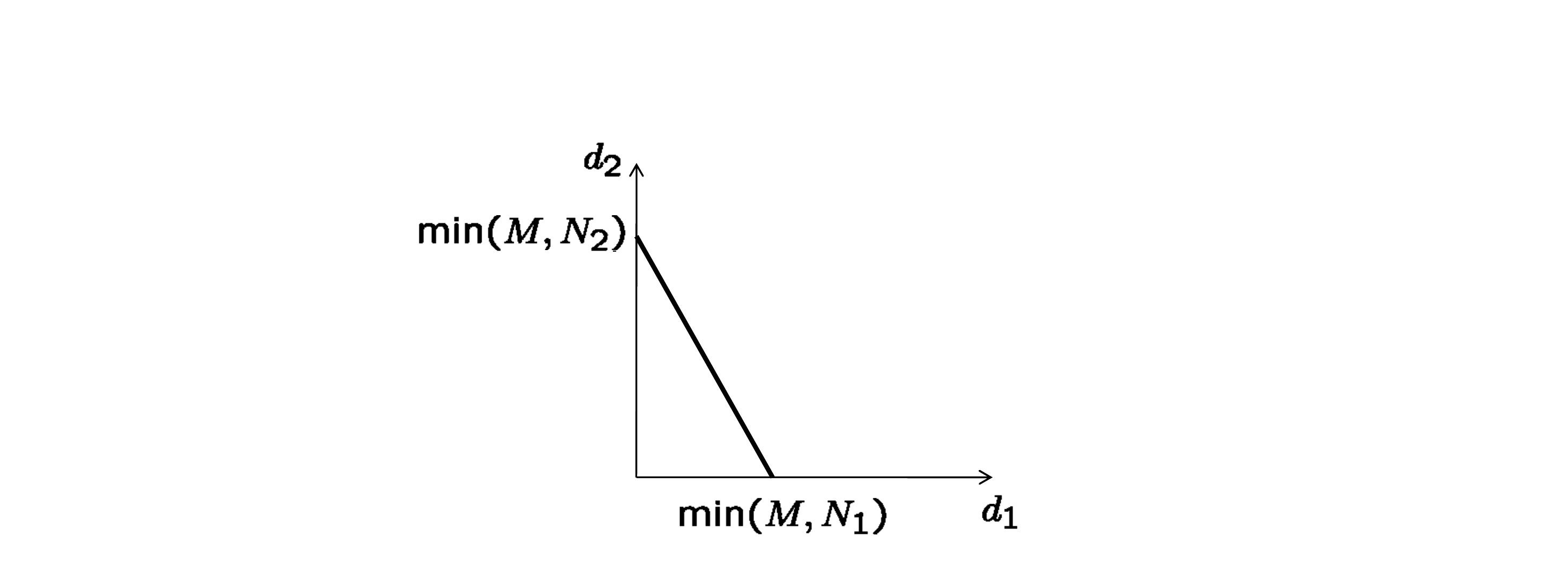}
\caption{The DoF region of the 2-user MIMO broadcast channel.}
\end{center}
\end{figure}

{\it Remark 1:} It's easy to see that a simple time division multiplexing between the two users is
optimal from the perspective of degrees of freedom.
Note that time division multiplexing between the two users is DoF-region optimal for a 2-user MIMO broadcast
channel with perfect CSIT only when $M \leq N_1$ and $M \leq N_2.$ Thus, the absence of CSIT does not affect the DoF
region of the channel only when $M \leq N_1$ and $M \leq N_2.$

{\it Remark 2:} Note that in all prior DoF region characterizations, including MIMO multiple access channel,
MIMO broadcast channel, and MIMO interference channel (all with perfect CSIT), the slanting face has a
slope of 45 degrees, i.e., a degree of freedom lost by one user results in a degree of freedom
gained by another user. In other words, the tradeoff is on equal terms. However, with no CSIT, the
slope of the slanting face is not 45 degrees, and the tradeoff is no longer even. For example, suppose
$N_1 \leq N_2.$ Then for every degree of freedom gained by user 1 (the weaker user), user 2 (the stronger user)
loses $\min(M,N_2) / \min(M,N_1)$ degrees of freedom. To the best of our
knowledge, this is the first example of an uneven tradeoff in a DoF region.

\subsection{2-User MIMO Interference Channel}
\label{sec:ic_rslts}
In this section, we provide an outer bound and a partial characterization of the DoF region of the
2-user MIMO interference channel.
Please note that although the condition $N_1 \leq N_2$ is seen in the following theorems, the results can
be easily extended for the case $N_1 > N_2$ by symmetry.

\begin{theorem}
\label{thm:dofic1}
{\it (A case where the absence of CSIT does not reduce DoF region)}
If $M_2 \leq N_1 \leq N_2,$ the DoF region of the 2-user MIMO interference channel with no CSIT, as defined in Section \ref{sec:icmodel}, is the same with that of the same channel under the assumption of perfect CSIT \cite{Jafar_Fakhereddin}.
\end{theorem}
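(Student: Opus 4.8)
The converse direction requires no work: allowing the transmitters to access perfect CSIT can only enlarge the achievable rate region at every $P$, so the no-CSIT DoF region is automatically contained in the perfect-CSIT DoF region of \cite{Jafar_Fakhereddin}. Hence the whole content of the theorem is the \emph{achievability} claim that, under $M_2\le N_1\le N_2$, every DoF pair attainable with perfect CSIT is already attainable with no CSIT. The plan is therefore: (i) write out the known perfect-CSIT region explicitly; (ii) show that the hypothesis $M_2\le N_1\le N_2$ forces every point of that region into a simple polytope in which no receiver is ever asked to separate more streams than it has antennas; (iii) achieve that polytope with a CSIT-free scheme.

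For step (ii), recall that the perfect-CSIT region of \cite{Jafar_Fakhereddin} is the set of $(d_1,d_2)\in\mathbb{R}_2^+$ with $d_1\le\min(M_1,N_1)$, $d_2\le\min(M_2,N_2)$, and $d_1+d_2\le\min\{M_1+M_2,\ N_1+N_2,\ \max(M_1,N_2),\ \max(M_2,N_1)\}$. Under $M_2\le N_1\le N_2$ one has $\max(M_2,N_1)=N_1$, while $N_1\le N_1+N_2$ and $N_1\le N_2\le\max(M_1,N_2)$; so the sum-constraint collapses and every point of the perfect-CSIT region satisfies, in particular,
\begin{equation}
d_1\le\min(M_1,N_1),\qquad d_2\le\min(M_2,N_2),\qquad d_1+d_2\le\min(M_1+M_2,\ N_1)\le N_2 .
\end{equation}
I would then observe that these inequalities are exactly what is needed to place $(d_1,d_2)$ simultaneously inside the DoF region of the two-user MIMO multiple-access channel seen at receiver $1$ (transmit antennas $M_1,M_2$, receive antennas $N_1$) and inside the DoF region of the MIMO multiple-access channel seen at receiver $2$ (receive antennas $N_2$): using $M_2\le N_1\le N_2$ one checks $d_2\le M_2=\min(M_2,N_1)$, $d_1\le\min(M_1,N_1)\le\min(M_1,N_2)$, and $d_1+d_2\le\min(M_1+M_2,N_1)\le\min(M_1+M_2,N_2)$.

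For step (iii), let each transmitter send i.i.d.\ Gaussian signals (spatially white, no beamforming, hence nothing CSIT-dependent), and let each receiver decode \emph{both} messages by treating its received signal as a MIMO multiple-access channel from the two transmitters. The DoF region of a MIMO multiple-access channel is unaffected by the absence of CSIT (the receiver has full CSIR, and the cut-set-type converse does not use CSIT), so by step (ii) receiver $1$ can recover $(W_1,W_2)$ at DoF pair $(d_1,d_2)$ and, simultaneously, receiver $2$ can do the same; in particular receiver $j$ recovers $W_j$ at $d_j$ degrees of freedom. Standard block-coding over $n$ channel uses handles non-integer $d_i$, and closedness of the DoF region extends the conclusion from rational points to the whole region. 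This yields the reverse inclusion and hence equality of the two regions.

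The only genuinely delicate part is the region bookkeeping in step (ii): one must verify that the facet $d_1+d_2\le\max(M_2,N_1)$ of the Jafar--Fakhereddin region is precisely the one that degenerates to $d_1+d_2\le N_1$ under the hypothesis $M_2\le N_1$, and that the resulting polytope is contained in both relevant multiple-access DoF regions. Once that is in place, everything is elementary: the achievability is just ``each receiver decodes everything, as a MAC,'' which needs only CSIR. It is worth flagging the contrast with the general case, where $\max(M_2,N_1)$ exceeds $N_1$, receivers can no longer afford to resolve all streams, interference alignment would be required, and the no-CSIT region genuinely shrinks.
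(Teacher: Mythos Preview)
Your proof is correct and follows essentially the same approach as the paper: the converse is immediate because adding CSIT can only enlarge the region, and achievability exploits that under $M_2\le N_1\le N_2$ each receiver has enough antennas to separate all transmitted streams without any CSIT-dependent precoding. The paper states the achievability in one line as ``receiver zero forcing'' while you frame it as ``each receiver decodes the MAC,'' but these are the same mechanism and your step~(ii) bookkeeping is exactly the verification that the perfect-CSIT polytope collapses to $d_1+d_2\le N_1$, which is what makes either scheme succeed.
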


\begin{figure}
\begin{center}
\includegraphics[width=400pt, trim=0 100 0 120, clip]{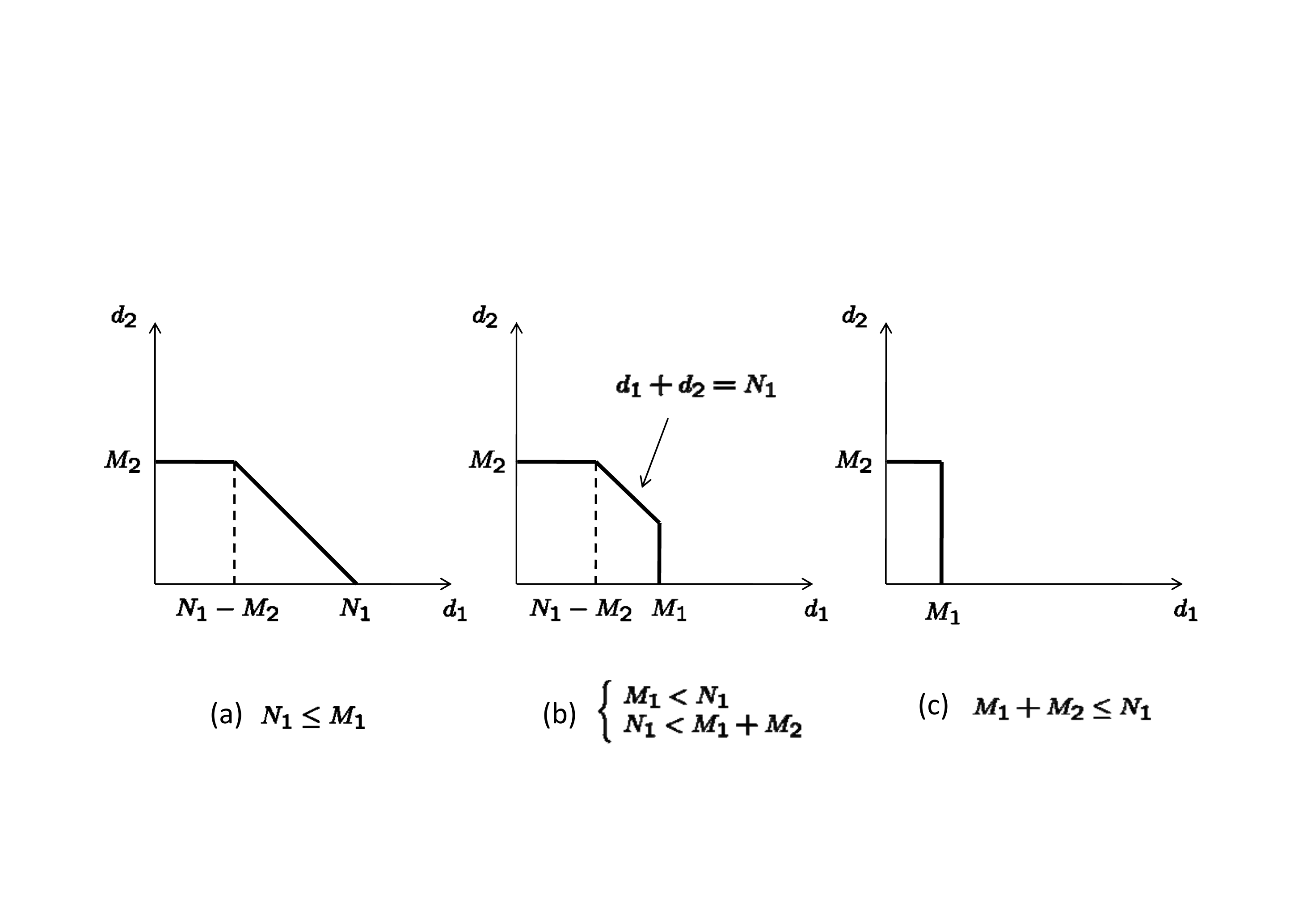}
\caption{The DoF region of the 2-user MIMO interference channel when $M_2 \leq N_1 \leq N_2$.}
\end{center}
\end{figure}

\begin{theorem}
\label{thm:dofic_out}
{\it (A DoF outer bound)}
If $N_1 \leq N_2$ and $N_1 < M_2,$ the DoF region, denoted as ${\bf D}$, of the 2-user MIMO
interference channel with no CSIT, as defined in Section \ref{sec:icmodel}, satisfies
\begin{equation}
{\bf D} \subseteq {\bf D}_{out} \triangleq
\left\{(d_1,d_2)\in\mathbb{R}_2^+: \frac{d_1}{N_1}+\frac{d_2}{\min(M_2,N_2)}\leq 1\right\}.
\label{eq:dofic_out}
\end{equation}
\end{theorem}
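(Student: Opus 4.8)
The plan is to adapt, to the interference setting, the converse behind Theorem~\ref{thm:dofbc}, after handing receiver~$2$ a genie that reveals $W_1$. Write $\mathcal{H}$ for all the channel matrices over a coding block of length $n$; perfect CSIR lets us keep $\mathcal{H}$ in the conditioning, and no CSIT makes ${\bf X}^{[i],n}$ a deterministic function of $W_i$ alone. By Fano's inequality, $nR_1 \leq h({\bf Y}^{[1],n}|\mathcal{H}) - h({\bf Y}^{[1],n}|W_1,\mathcal{H}) + n\epsilon_n$ and $nR_2 \leq h({\bf Y}^{[2],n}|W_1,\mathcal{H}) - h({\bf Y}^{[2],n}|W_1,W_2,\mathcal{H}) + n\epsilon_n$ with $\epsilon_n\to 0$. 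Since $({\bf X}^{[1],n},{\bf X}^{[2],n})$ is determined by $(W_1,W_2)$ and $\mathcal{H}$ is given, $h({\bf Y}^{[2],n}|W_1,W_2,\mathcal{H}) = h({\bf Z}^{[2],n}) = nN_2\log(\pi e)$; and since ${\bf H}^{[21]}{\bf X}^{[1]}$ is known given $(W_1,\mathcal{H})$ we may subtract it and write $h({\bf Y}^{[2],n}|W_1,\mathcal{H}) = h(\tilde{\bf Y}^{[2],n}|W_1,\mathcal{H})$, where $\tilde{\bf Y}^{[2]}(t) := {\bf H}^{[22]}(t){\bf X}^{[2]}(t) + {\bf Z}^{[2]}(t)$. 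Finally, a maximum-entropy estimate (the per-use output covariance has trace $O(P)$) gives $h({\bf Y}^{[1],n}|\mathcal{H}) \leq nN_1\log P + n\,o(\log P)$.

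The next ingredient is a statistical-equivalence identity that uses the absence of CSIT together with the isotropy of the channel. Since the entries are i.i.d., $({\bf H}^{[11]},{\bf H}^{[12]})$ has the same distribution as the leading-$N_1$-rows pair $({\bf H}^{[21]}_{1:N_1},{\bf H}^{[22]}_{1:N_1})$ — both are independent Gaussian matrices of sizes $N_1\times M_1$ and $N_1\times M_2$, independent across time — and both are independent of the messages, while ${\bf Y}^{[1],n}$ and ${\bf Y}^{[2],n}_{1:N_1}$ are built from these channel parts through the same affine law. Hence $h({\bf Y}^{[1],n}|W_1,\mathcal{H}) = h({\bf Y}^{[2],n}_{1:N_1}|W_1,\mathcal{H}) = h(\tilde{\bf Y}^{[2],n}_{1:N_1}|W_1,\mathcal{H})$, the last step again by subtracting the (known) ${\bf X}^{[1]}$-term. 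Combining the displayed relations,
\begin{eqnarray}
\frac{nR_1}{N_1} + \frac{nR_2}{\min(M_2,N_2)} &\leq& n\log P + n\,o(\log P) + n\epsilon_n - \frac{h(\tilde{\bf Y}^{[2],n}_{1:N_1}|W_1,\mathcal{H})}{N_1} + \frac{h(\tilde{\bf Y}^{[2],n}|W_1,\mathcal{H})}{\min(M_2,N_2)}, \nonumber
\end{eqnarray}
so it suffices to prove the ``per-antenna entropy'' inequality $N_1\,h(\tilde{\bf Y}^{[2],n}|W_1,\mathcal{H}) \leq \min(M_2,N_2)\,h(\tilde{\bf Y}^{[2],n}_{1:N_1}|W_1,\mathcal{H}) + n\,o(\log P)$; dividing by $n$, letting $n\to\infty$, dividing by $\log P$ and letting $P\to\infty$ then yields ${\bf D}\subseteq{\bf D}_{out}$.

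I would establish that inequality in two stages. First, because the rows of ${\bf H}^{[22]}(t)$ and the components of ${\bf Z}^{[2]}(t)$ are i.i.d., simultaneously permuting the antenna index in all $N_2$-row objects leaves the joint law of $(W_1,\mathcal{H},{\bf Y}^{[2],n})$ invariant, so $\psi(k):=h(\tilde{\bf Y}^{[2],n}_S|W_1,\mathcal{H})$ depends on $S$ only through $k=|S|$; writing $\psi(k+1)-\psi(k)$ as the conditional entropy of one ``new'' antenna given $k$ others and using that conditioning reduces entropy (together with this symmetry) gives $\psi(k+1)-\psi(k)\leq\psi(k)-\psi(k-1)$, i.e.\ $\psi$ is concave with $\psi(0)=0$, hence $\psi(k)/k$ is non-increasing in $k$. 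Since the hypotheses $N_1<M_2$ and $N_1\leq N_2$ give $N_1\leq m\leq N_2$ with $m:=\min(M_2,N_2)$, this already yields $N_1\,h(\tilde{\bf Y}^{[2],n}_{1:m}|W_1,\mathcal{H}) \leq m\,h(\tilde{\bf Y}^{[2],n}_{1:N_1}|W_1,\mathcal{H})$. Second, when $m=M_2<N_2$ one must still pass from the first $M_2$ antennas of receiver~$2$ to all $N_2$: given $\mathcal{H}$ and $\tilde{\bf Y}^{[2],n}_{1:M_2}$, the a.s.\ invertible leading $M_2\times M_2$ block of ${\bf H}^{[22]}(t)$ recovers ${\bf X}^{[2]}(t)$ up to additive Gaussian noise, so each remaining component $\tilde{\bf Y}^{[2]}_j(t)$, $j>M_2$, equals a deterministic function of $(\tilde{\bf Y}^{[2],n}_{1:M_2},\mathcal{H})$ plus Gaussian noise whose conditional log-variance has finite mean not depending on $P$ (using $\mbox{E}[\log\sigma_{\min}^{-2}]<\infty$ for a square Gaussian matrix); hence $h(\tilde{\bf Y}^{[2],n}|W_1,\mathcal{H}) \leq h(\tilde{\bf Y}^{[2],n}_{1:M_2}|W_1,\mathcal{H}) + n\,o(\log P)$. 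Chaining the two stages gives the required inequality.

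The main obstacle is this ``per-antenna entropy'' lemma: the concavity/permutation-symmetry argument must be carried out with the full channel state $\mathcal{H}$ in the conditioning (so that it is the rows of ${\bf H}^{[22]}$ that are permuted measure-preservingly), and the rank-reduction sub-step genuinely invokes $N_1<M_2$ and an integrability estimate for the inverse of a square Gaussian matrix. Everything else — Fano, the $W_1$-genie, the maximum-entropy bound, and the statistical-equivalence identity — is routine book-keeping.
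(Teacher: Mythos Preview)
Your proposal is correct and rests on the same three ingredients as the paper's proof: Fano plus a $W_1$-genie at receiver~2, the statistical-equivalence identity coming from i.i.d.\ channel rows (so that ${\bf Y}^{[1],n}$ and ${\bf Y}^{[2],n}_{1:N_1}$ are interchangeable in the conditioning), and a chain-rule/symmetry bound on per-antenna contributions together with a rank-reduction step when $M_2<N_2$.

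The packaging differs slightly. The paper works with mutual informations, introduces an auxiliary prelog $r$ for $I(W_2;{\bf Y}^{[1],n}\mid W_1,{\bf H})$, bounds $R_1\le (N_1-r)\log P$ and $R_2\le \tfrac{\min(M_2,N_2)}{N_1}r\log P$ separately via an explicit term-by-term chain rule, and then eliminates $r$. You instead form the weighted sum $R_1/N_1+R_2/m$ directly in differential-entropy form and isolate a single ``per-antenna entropy'' inequality, which you prove by observing that $\psi(k)=h(\tilde{\bf Y}^{[2],n}_{1:k}\mid W_1,\mathcal H)$ is concave with $\psi(0)=0$ (exchangeability plus conditioning-reduces-entropy). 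The two arguments are equivalent; your concavity lemma is a clean abstraction of exactly the chain-rule manipulation the paper writes out line by line, and your rank-reduction step (with the explicit integrability of $\mathrm{E}[\log\sigma_{\min}^{-2}]$ for a square Gaussian matrix) is a more careful version of what the paper records as an $o(\log P)$ remainder.
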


\begin{theorem}
\label{thm:dofic2}
{\it (Tightness of the DoF outer bound)}
If $N_1 \leq N_2,$ $N_1 < M_2,$  and $N_1 \leq M_1,$ the DoF outer bound given in (\ref{eq:dofic_out})
is tight, i.e., ${\bf D} = {\bf D}_{out}.$
\end{theorem}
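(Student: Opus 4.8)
The plan is to show that every corner point of $\mathbf{D}_{out}$ is achievable; convexity of the DoF region (via time sharing between coding schemes) then yields the whole outer bound. Under the hypotheses $N_1 \le N_2$, $N_1 < M_2$ and $N_1 \le M_1$, the region $\mathbf{D}_{out}$ is the triangle with vertices $(0,0)$, $(N_1,0)$ and $(0,\min(M_2,N_2))$. The origin is trivial. The point $(0,\min(M_2,N_2))$ is achieved by silencing transmitter~1 entirely and letting transmitter~2 communicate with receiver~2 over an interference-free point-to-point MIMO link with $M_2$ transmit and $N_2$ receive antennas, whose DoF is $\min(M_2,N_2)$; here no CSIT is needed by \cite{Telatar}. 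Symmetrically, the point $(N_1,0)$ is achieved by silencing transmitter~2 and using the point-to-point link from transmitter~1 (with $M_1 \ge N_1$ antennas) to receiver~1, whose DoF is $\min(M_1,N_1)=N_1$. Since both extreme points other than the origin are achievable and the DoF region is closed under time sharing, every point of the triangle $\mathbf{D}_{out}$ is achievable, so $\mathbf{D} \supseteq \mathbf{D}_{out}$. Combined with Theorem~\ref{thm:dofic_out}, this gives $\mathbf{D} = \mathbf{D}_{out}$.

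The only subtlety worth spelling out is why no nontrivial interference-management scheme (e.g.\ partial simultaneous transmission) is needed: because the outer bound's slanting face has the shape $d_1/N_1 + d_2/\min(M_2,N_2) \le 1$, it is \emph{exactly} the convex hull of the two single-user optimal points, so orthogonalizing the two links in time already meets the bound. This is in contrast to the perfect-CSIT case, where simultaneous transmission with zero-forcing can strictly outperform time division; here the lack of CSIT removes any such gain, which is precisely the content of Theorem~\ref{thm:dofic2}.

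I do not expect a genuine obstacle in this proof: the converse is already done in Theorem~\ref{thm:dofic_out}, and the achievability reduces to the classical point-to-point MIMO DoF result plus time sharing. The one point requiring a line of justification is that $M_1 \ge N_1$ is exactly what makes $\min(M_1,N_1) = N_1$, so that the corner $(N_1,0)$ — rather than a strictly smaller point $(\min(M_1,N_1),0)$ — is reached; this is why the hypothesis $N_1 \le M_1$ appears in the statement.
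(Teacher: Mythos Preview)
Your proposal is correct and coincides with the paper's own proof, which simply states that achievability follows trivially by time division between the two users. Your write-up is more detailed (explicitly identifying the two corner points and explaining why the hypothesis $N_1\le M_1$ is needed so that $\min(M_1,N_1)=N_1$), but the underlying argument is identical.
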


\begin{figure}
\begin{center}
\includegraphics[width=400pt, trim=0 160 0 120, clip]{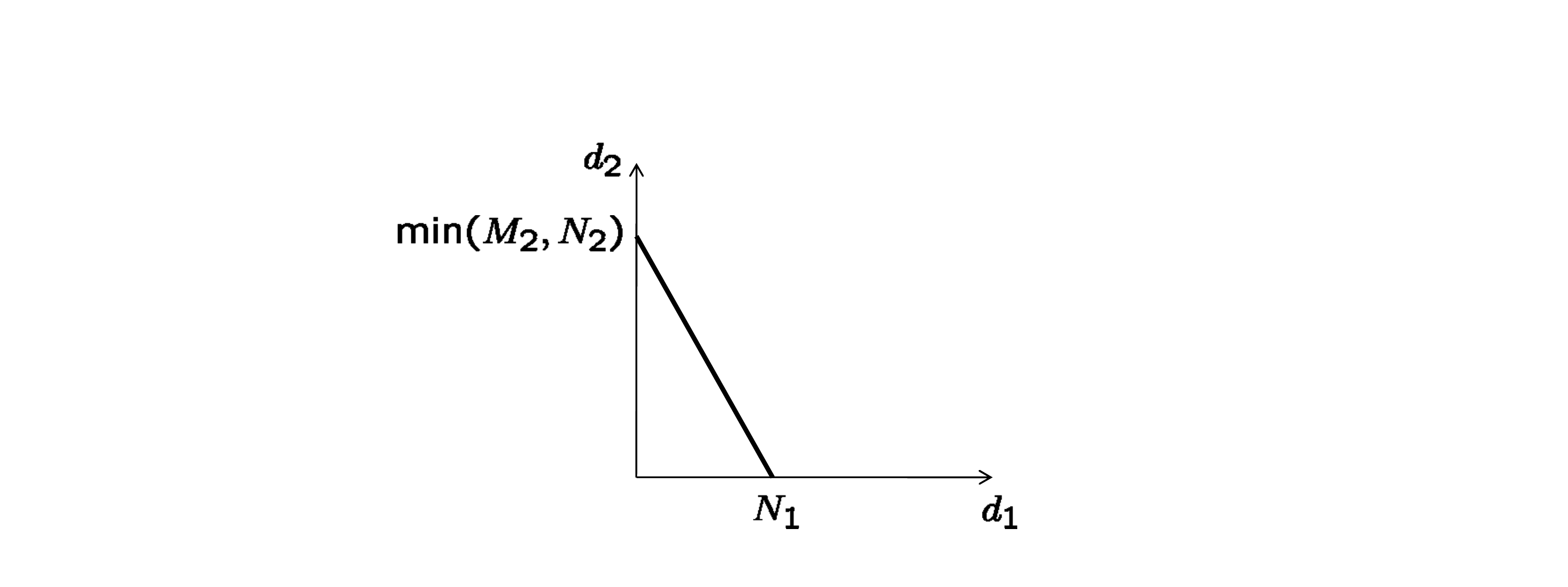}
\caption{The DoF region of the 2-user MIMO interference channel when $N_1 \leq N_2,$ $N_1 < M_2,$ and $N_1 \leq M_1.$}
\end{center}
\end{figure}

We prove these theorems in the Section \ref{sec:prf}.

{\it Remark 1:} To completely characterize of the DoF region of the 2-user MIMO interference
channel, one needs to solve the remaining case that $N_1 < N_2,$ $N_1 < M_2$, and $ M_1 < N_1.$
Figure \ref{fig:f3} shows the gap between ${\bf D}_{out}$ and the best achievable region achieved by time division
multiplexing and receiver zero forcing when $N_1 \leq N_2,$ $N_1 < M_2$, and $ M_1 < N_1.$
For example, for the case $(M_1, M_2, N_1, N_2) = (1, 3, 2, 4)$, while
$(d_1, d_2)=(1, \frac{3}{2}) \in {\bf D}_{out},$
we can only achieve $(d_1, d_2)=(1, 1)$ by zero forcing at receivers.
It's interesting to note that when $N_1 = N_2,$ points A and B in Figure \ref{fig:f3} coincide with each other,
establishing the tightness of the outer bound. Thus, when $N_1 = N_2,$ we are also able to provide a complete
characterization of the DoF region of the channel.

\begin{figure}
\begin{center}
\includegraphics[width=400pt, trim=0 120 0 80, clip]{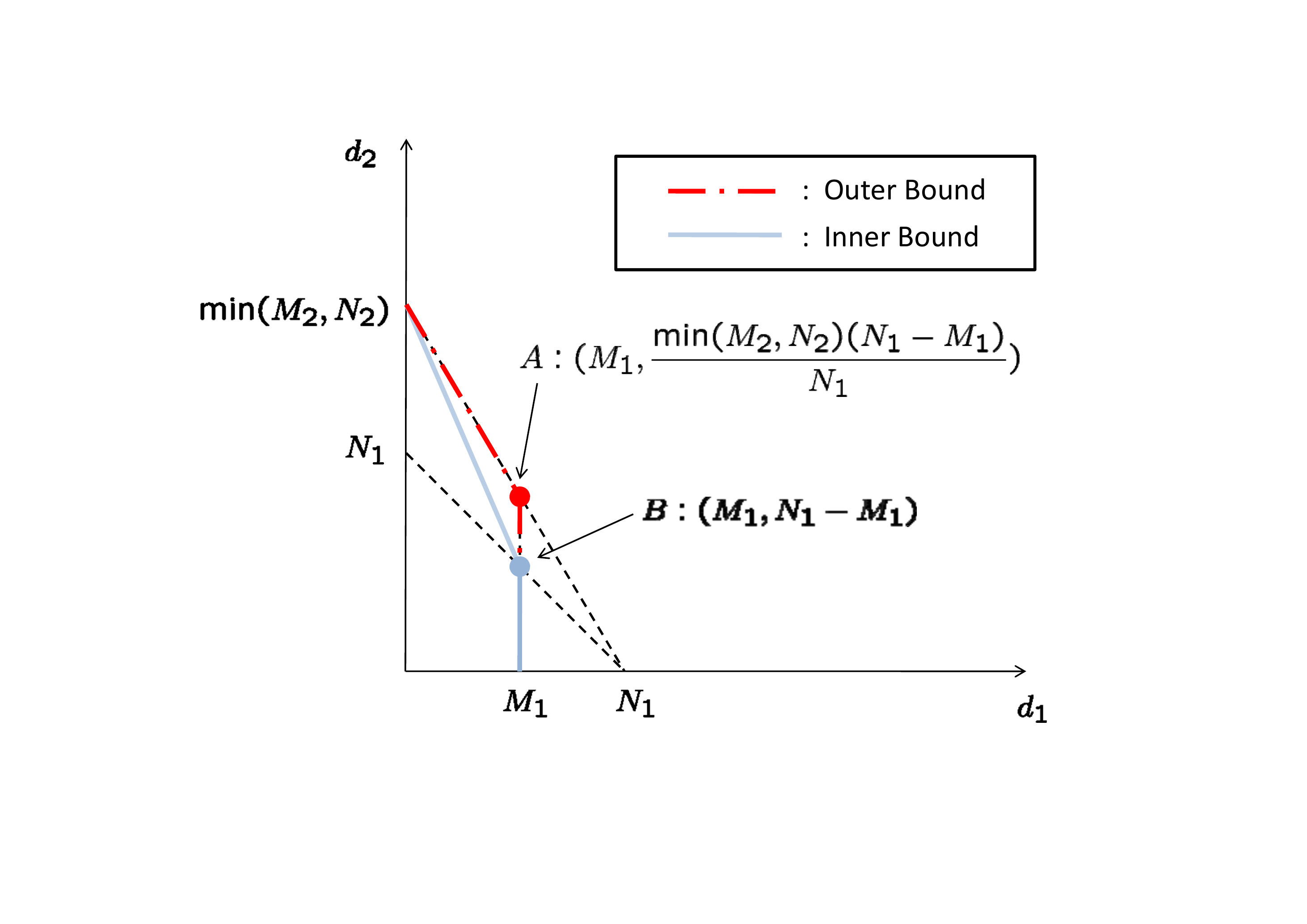}
\caption{The gap between the inner and outer bound of the DoF region when $N_1 \leq N_2,$ $N_1 < M_2$, and $ M_1 < N_1.$} \label{fig:f3}
\end{center}
\end{figure}

\begin{figure}
\begin{center}
\includegraphics[width=400pt, trim=0 100 0 80, clip]{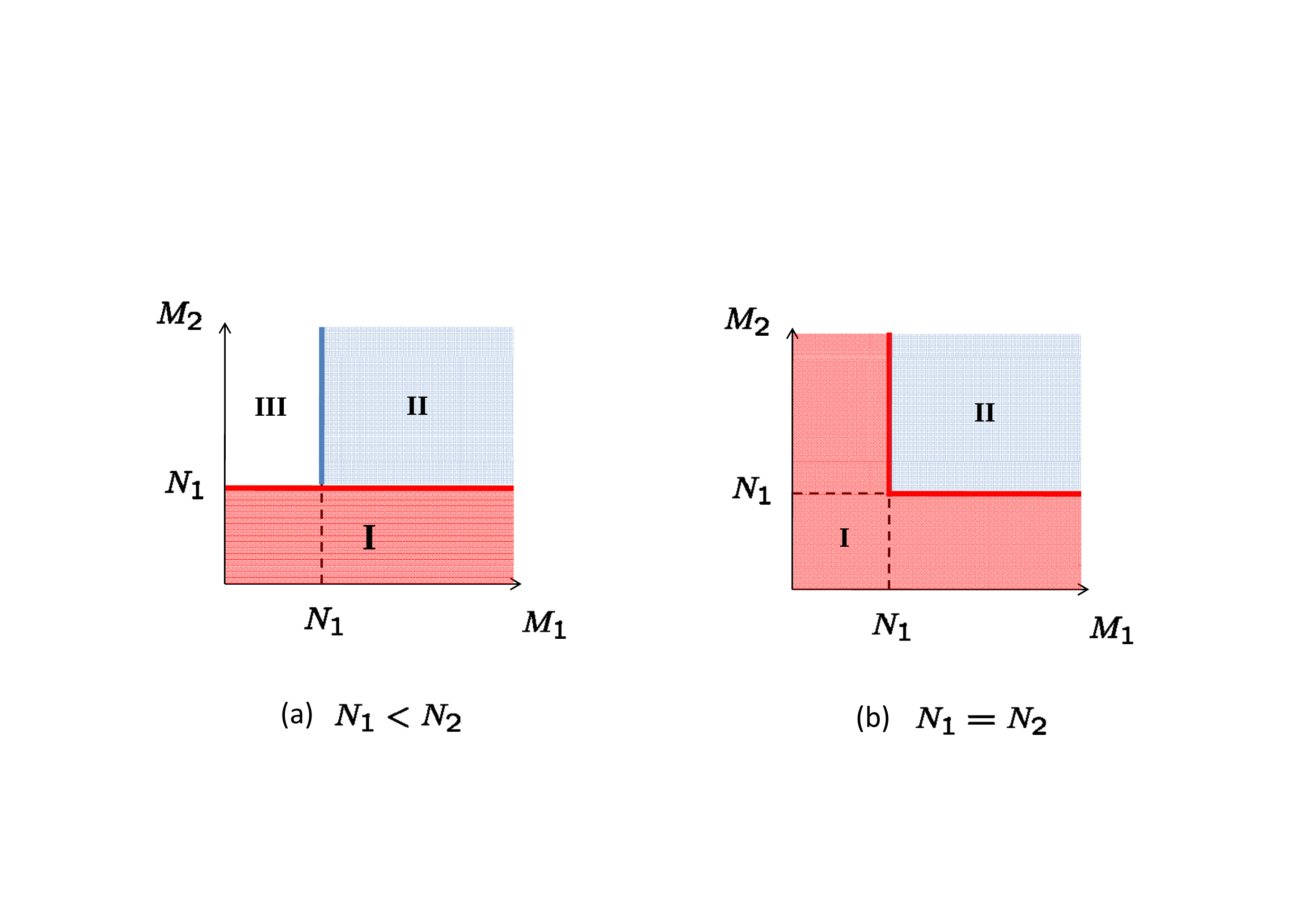}
\caption{The plot showing the relative magnitudes of $(M_1,M_2,N_1,N_2).$} \label{fig:rgm}
\end{center}
\end{figure}

\begin{table}[hbt]
\begin{center}
\caption{DoF region of the 2-user MIMO interference channel when $N_1 < N_2.$}\label{tbl:icdof1}
\begin{tabular}{ | c | c | c | c | c |}
\hline
Case Number & Case Definition & DoF Region & Achievable Scheme & Comparison \\
\hline
I & $M_2 \leq N_1 $ & $ \left\{ \begin{array}{l} d_1 \leq \min(M_1, N_1) \\
d_2 \leq M_2 \\
d_1+d_2 \leq N_1 \end{array} \right. $ & Receiver Zero Forcing & ${\bf D} = {\bf D}_{CSIT}$\\
\hline
II &
$ \left\{ \begin{array}{l} N_1 < M_2 \\
N_1 \leq M_1 \end{array} \right.$ &
$ \frac{d_1}{N_1} + \frac{d_2}{\min(M_2, N_2)} \leq 1$
& Time Division & ${\bf D} \subset {\bf D}_{CSIT}$ \\
\hline
III &
$ \left\{ \begin{array}{l} N_1 < M_2 \\
M_1 < N_1 \end{array} \right.$ &
Unknown
& Unknown & ${\bf D} \subset {\bf D}_{CSIT}$ \\
\hline
\end{tabular}
\end{center}
\end{table}

\begin{table}[hbt]
\begin{center}
\caption{DoF region of the 2-user MIMO interference channel when $N_1 = N_2.$}\label{tbl:icdof2}
\begin{tabular}{ | c | c | c | c | c |}
\hline
Case Number & Case Definition & DoF Region & Achievable Scheme & Comparison \\
\hline
I & $M_2 \leq N_1 $ or $M_1 \leq N_1 $ &
$ \left\{ \begin{array}{l}
d_1 \leq M_1 \\
d_2 \leq M_2 \\
d_1+d_2 \leq N_1 \end{array} \right. $ & Receiver Zero Forcing & ${\bf D} = {\bf D}_{CSIT}$\\
\hline
II &
$ \left\{ \begin{array}{l} N_1 < M_2 \\
N_1 < M_1 \end{array} \right.$ &
$ d_1 + d_2 \leq N_1$
& Time Division & ${\bf D} \subset {\bf D}_{CSIT}$ \\
\hline
\end{tabular}
\end{center}
\end{table}

{\it Remark 2:} Note that, unlike the case in Theorem \ref{thm:dofic1}, when $N_1 < N_2$ and $N_1 < M_2,$
DoF region shrinks (being strictly smaller) due to the absence of CSIT. Specifically, when
$N_1 < N_2$ and $N_1 < M_2,$ the DoF region satisfies
\begin{equation}
{\bf D} \subset {\bf D}_{CSIT},
\end{equation}
where ${\bf D}_{CSIT}$ denotes the DoF region of the same channel under the assumption of perfect CSIT
and CSIR. Moreover, when $N_1 < N_2,$ $N_1 < M_2,$ and $N_1 \leq M_1,$ the degrees of freedom are lost
to the extent that a simple time division multiplexing between the two users is optimal.
These results are summarized in Table \ref{tbl:icdof1}, and the parameter regime of $(M_1, M_2,
N_1, N_2)$ is illustrated in Figure \ref{fig:rgm}.

\subsection{Interference Alignment on the 2 User MIMO Interference channel with no CSIT}
Interference alignment is typically associated with scenarios where one receiver faces interference from distributed transmitters (the compound BC \cite{Weingarten_Shamai_Kramer, Gou_Jafar_Wang} is a notable exception). Moreover, it is often associated with scenarios where the transmitters have perfect channel knowledge. Therefore it is interesting to note that the possibility of interference alignment arises even in the two user MIMO interference channel with no CSIT, where a receiver sees interference from only one transmitter. The following discussion  illustrates this point.
\subsubsection{A Simple Example} The interference alignment scheme is described as follows.
Suppose that $M_1 = N_1 = 1,$ $M_2 = 3,$ and $N_2 = 4.$ The DoF region for this case is a triangle with
vertices $(d_1, d_2)=(0,0),$ $(1,0),$ and $(0,3).$ Now consider the point $(0.5, 1.5)$ on the boundary of
the DoF region. This point can be achieved by an even time division between the operation point
$(1,0)$ and $(0,3),$ which will require each transmitter to use power $P$. Interestingly, it can also be
achieved with simultaneous transmission by both users, such that user 1 sends one beam with power $P$, and user 2 sends three beams, each with power $\sqrt{P}.$ Since user 2 has four receive antennas, it can resolve all four beams and get $0.5 + 0.5 + 0.5 = 1.5$ degrees of freedom. At receiver 1, by treating interference as noise, user 1 get $0.5$ degree of freedom, because the signal power is $P$ and the noise power is
${3 \|{\bf H}^{[12]}\|}^2 \sqrt{P} + 1.$ The interference aligns in the sense that even though there are
three interfering beams, the power of the interference is the same in log scale as if there is only one
beam, i.e., the strongest exponent dominates. Clearly, the interference alignment solution is better than
the time division solution because user 2 only needs power $3 \sqrt{P}$ instead of
scaling power like $P$. Also note that if user 2's transmit power is constrained to scale like
only $\sqrt{P},$ then the time division solution is not feasible while the alignment solution still is optimal.

\subsubsection{The Case where DoF Region Remains Open}
Finally, the possibility of interference alignment is also the remaining hurdle that needs to be overcome for the class of 2-user MIMO  interference channels (class III in Table I) where the DoF region remains unknown. As an illustrative example of this class of channels let us consider again the case $(M_1, M_2,N_1,N_2)=(1,3,2,4)$ and the open problem -- what is the max. number of DoF achieved by user 2 when user 1 achieves 1 DoF ? As stated earlier, the known inner bounds only allow us to achieve 1 DoF for user 2, whereas our outer bound shows that user 2 cannot achieve more than 1.5 DoF in this case. While it is beyond the scope of this work, we expect that if we consider a a finite state compound channel setting, i.e. the channel coefficients are drawn arbitrarily from a discrete set of finite cardinality consisting of generic values -- also known as the finite state compound channel setting -- then following similar arguments as \cite{Gou_Jafar_Wang} it will be possible to show that without CSIT, user 2 can indeed achieve 1.5 DoF (almost surely) at the same time that user 1 achieves 1 DoF. In that case, the open problem bears interesting parallels to the general (not isotropic) MISO broadcast channel with no CSIT where the DoF are unknown with channel uncertainty over a continuous space \cite{Lapidoth_Shamai_Wigger} but the outer bounds are achievable under the finite state compound channel setting \cite{Gou_Jafar_Wang}. Thus, settling the remaining open problem (Case III)  for the DoF region of the 2-user MIMO IC -- especially under general channel uncertainty models -- may be as difficult as settling the corresponding issue for the MISO broadcast channel.

\section{Proof}
\label{sec:prf}

\subsection{Proof of Theorem \ref{thm:dofbc}}
Without loss of generality, let us assume $N_1\leq N_2$.
The case where $M\leq N_1\leq N_2$ is trivial, because in this case the degrees of freedom region, even
with perfect CSIT, is given by:
\begin{equation}
\mathbf{D}=\{(d_1,d_2)\in\mathbb{R}_2^+: d_1+d_2\leq M\},
\end{equation}
which is clearly achievable even without CSIT, by simple time-division between the two users.

For the remainder of this section, we consider $M\geq N_1$.

Since the MIMO broadcast channel without CSIT, as defined in Section \ref{sec:bcmodel}, falls in the class
of degraded broadcast channels \cite{Bergmans,Bergmans1}, its capacity region $C(P)$ is the set of rate
pairs $(R_1,R_2)$ satisfying:
\begin{eqnarray}
R_1
&\leq&
I(U;{\bf Y}^{[1]}| {\bf H}^{[1]},{\bf H}^{[2]})\label{eq:R_1bound}\\
&=&
I({\bf X};{\bf Y}^{[1]}|{\bf H}^{[1]},{\bf H}^{[2]})-I({\bf X};{\bf Y}^{[1]}|{\bf H}^{[1]},{\bf H}^{[2]},U)\\
R_2
&\leq&
I({\bf X};{\bf Y}^{[2]}|{\bf H}^{[1]},{\bf H}^{[2]},U)\label{eq:R_2bound}
\end{eqnarray}
where $U\rightarrow{\bf X}\rightarrow({\bf Y}^{[1]},{\bf Y}^{[2]})$ forms a Markov chain. Since the channel between the transmitter and receiver 1 cannot have more than $\min(M,N_1)=N_1$ degrees of freedom, we have:
\begin{equation}
I({\bf X};{\bf Y}^{[1]}|{\bf H}^{[1]},{\bf H}^{[2]}) \leq N_1\log(P)+o(\log(P)).
\end{equation}
Let us define $r$ as the prelog for the term $I({\bf X};{\bf Y}^{[1]}|{\bf H}^{[1]},{\bf H}^{[2]},U).$ Thus, we have
\begin{eqnarray}
r\log(P)+o(\log(P))
& = &
I({\bf X};{\bf Y}^{[1]}|{\bf H}^{[1]},{\bf H}^{[2]},U) \label{eq:definer} \\
& \stackrel{(a)}{=} &
\sum_{i=1}^{N_1}I({\bf X};{\bf Y}^{[1]}_i|{\bf H}^{[1]},{\bf H}^{[2]},U,{\bf Y}^{[1]}_{i+1:N_1}) \\
& \stackrel{(b)}{\geq} &
N_1 I({\bf X};{\bf Y}^{[1]}_1|{\bf H}^{[1]},{\bf H}^{[2]},U,{\bf Y}^{[1]}_{2:N_1}), \label{eq:preuseful}
\end{eqnarray}
where $(a)$ follows by chain rule and $(b)$ follows by the fact that adding conditioning does not increase the differential entropy. Dividing both sides by $N_1,$ we have
\begin{equation}
I({\bf X};{\bf Y}^{[1]}_1|{\bf H}^{[1]},{\bf H}^{[2]},U,{\bf Y}^{[1]}_{2:N_1})
\leq
\frac{r}{N_1}\log(P)+o(\log(P)). \label{eq:useful}
\end{equation}
Now we can write the upperbound (\ref{eq:R_2bound}) for $R_2$ as follows.
\begin{eqnarray}
R_2
& \leq &
I({\bf X}; {\bf Y}^{[2]}|{\bf H}^{[1]},{\bf H}^{[2]},U) \\
& = &
I({\bf X}; {\bf Y}^{[2]}_{1:\min(M,N_2)}, {\bf Y}^{[2]}_{\min(M,N_2)+1:N_2} | {\bf H}^{[1]},{\bf H}^{[2]},U) \\
& = &
I({\bf X}; {\bf Y}^{[2]}_{1:\min(M,N_2)} | {\bf H}^{[1]},{\bf H}^{[2]},U) +
\underbrace{I({\bf X};{\bf Y}^{[2]}_{\min(M,N_2)+1:N_2} | {\bf H}^{[1]},{\bf H}^{[2]},U,{\bf Y}^{[2]}_{1:\min(M,N_2)})}_{o(\log(P))} \label{eq:redundant} \\
& = &
I({\bf X}; {\bf Y}^{[2]}_{1:N_1}, {\bf Y}^{[2]}_{N_1+1:\min(M,N_2)} | {\bf H}^{[1]},{\bf H}^{[2]},U)+o(\log(P)) \\
& \stackrel{(a)}{=} &
I({\bf X}; {\bf Y}^{[1]}, {\bf Y}^{[2]}_{N_1+1:\min(M,N_2)} | {\bf H}^{[1]},{\bf H}^{[2]},U)+o(\log(P)) \\
& = &
I({\bf X}; {\bf Y}^{[1]} | {\bf H}^{[1]},{\bf H}^{[2]},U) +
I({\bf X}; {\bf Y}^{[2]}_{N_1+1:\min(M,N_2)} | {\bf H}^{[1]}, {\bf H}^{[2]}, U, {\bf Y}^{[1]})+o(\log(P)), \label{eq:bc_R2}
\end{eqnarray}
where $(a)$ follows by the fact that all channels are statistically equivalent.
The second term in (\ref{eq:bc_R2}) can be upperbounded as follows.
\begin{eqnarray}
I({\bf X}; {\bf Y}^{[2]}_{N_1+1:\min(M,N_2)} | {\bf H}^{[1]}, {\bf H}^{[2]}, U, {\bf Y}^{[1]})
& \stackrel{(a)}{=} &
\sum_{i=N_1+1}^{\min(M,N_2)} I({\bf X}; {\bf Y}^{[2]}_i | {\bf H}^{[1]}, {\bf H}^{[2]}, U, {\bf Y}^{[1]}, {\bf Y}^{[2]}_{i+1:\min(M,N_2)}) \\
& \stackrel{(b)}{\leq} &
\sum_{i=N_1+1}^{\min(M,N_2)} I({\bf X}; {\bf Y}^{[2]}_i | {\bf H}^{[1]}, {\bf H}^{[2]}, U, {\bf Y}^{[1]}_{2:N_1}) \\
& \stackrel{(c)}{\leq} &
\sum_{i=N_1+1}^{\min(M,N_2)} I({\bf X}; {\bf Y}^{[2]}_1 | {\bf H}^{[1]}, {\bf H}^{[2]}, U, {\bf Y}^{[1]}_{2:N_1}) \\
& \stackrel{(d)}{=} &
(\min(M, N_2) - N_1) \frac{r}{N_1} \log(P) + o(\log(P)), \label{eq:bc_R2_T2}
\end{eqnarray}
where $(a)$ follows by chain rule, $(b)$ follows by the fact that dropping conditioning does not
increase differential entropy, $(c)$ follows by the fact that all channels are statistically equivalent,
and $(d)$ follows (\ref{eq:useful}).
Substituting (\ref{eq:definer}) and (\ref{eq:bc_R2_T2}) into (\ref{eq:bc_R2}), we have
\begin{equation}
R_2 \leq \frac{\min(M,N_2)}{N_1}r\log(P)+o(\log(P)).
\end{equation}

Thus, for $0\leq r\leq N_1$, an outer bound on the boundary of the degrees of freedom region is characterized as follows.
\begin{eqnarray}
(d_1,d_2)=\left(N_1-r,  \frac{\min(M,N_2)}{N_1}r\right),
\end{eqnarray}
which implies that
\begin{eqnarray}
\mathbf{D}\subset \left\{(d_1,d_2)\in\mathbb{R}_2^+: \frac{d_1}{N_1}+\frac{d_2}{\min(M,N_2)}\leq 1\right\}.
\end{eqnarray}
Achievability of this outer bound follows trivially by time division between the two users.
This concludes the proof.

\subsection{Proof of Theorem \ref{thm:dofic1}}
Let a genie provide the transmitters with perfect channel state information.
Since giving CSIT does not hurt, the converse argument is still valid. Using the DoF result
of the 2-user MIMO interference channel with perfect CSIT \cite{Jafar_Fakhereddin}, we have
\begin{equation}
d_1 + d_2
\leq
\min \left( \max(M_1, N_2), \max(M_2, N_1) \right)
\leq
N_1.
\end{equation}
Achievability of this outer bound follows trivially by receiver zero forcing. This concludes the proof.

\subsection{Proof of Theorem \ref{thm:dofic_out}}
Before providing the proof, we would like to mention that this proof follows similar
lines with those in the proof of Theorem \ref{thm:dofbc}. We include the detailed proof
to show the subtle, though important, differences and for the sake of completeness.
For brevity, we use ${\bf H}$ to denote the set of all channel matrices. That is
\begin{equation}
{\bf H} = \{ {\bf H}^{[11]}, {\bf H}^{[12]}, {\bf H}^{[21]}, {\bf H}^{[22]} \}.
\end{equation}
For any codeword with $n$ channel usages, using Fano's inequality, we have
\begin{eqnarray}
n(R_1 - \epsilon_n)
&\leq&
I(W_1; {{\bf Y}^{[1]}}^n | {\bf H}) \\
&=&
I(W_1 W_2; {{\bf Y}^{[1]}}^n | {\bf H}) - I(W_2; {{\bf Y}^{[1]}}^n | W_1, {\bf H}) \\
& \stackrel{(a)}{\leq} &
nN_1 \log(P) + o(\log(P)) - I(W_2; {{\bf Y}^{[1]}}^n | W_1, {\bf H}), \label{eq:ic_nR1}
\end{eqnarray}
where (a) follows from the fact that the degrees of freedom of a 2-user MIMO multiple access channel is bounded by the number of receive antennas.

Let us define $r$ as the prelog of the term $\frac{1}{n} I(W_2; {{\bf Y}^{[1]}}^n | W_1, {\bf H}).$ Thus, we have
\begin{eqnarray}
n r \log(P) + o(\log(P))
&=&
I(W_2; {{\bf Y}^{[1]}}^n | W_1, {\bf H}) \label{eq:ic_I_r}\\
& \stackrel{(a)}{=} &
\sum_{i=1}^{N_1} I(W_2; {{\bf Y}^{[1]}_i}^n | W_1, {\bf H}, {\bf Y}^{[1]^{\scriptstyle{n}}}_{i+1:N_1}) \\
& \stackrel{(b)}{=} &
\sum_{i=1}^{N_1} I(W_2; {{\bf Y}^{[1]}_1}^n | W_1, {\bf H}, {\bf Y}^{[1]^{\scriptstyle{n}}}_{i+1:N_1}) \\
& \stackrel{(c)}{\geq} &
N_1 I(W_2; {{\bf Y}^{[1]}_1}^n | W_1, {\bf H}, {\bf Y}^{[1]^{\scriptstyle{n}}}_{2:N_1}),
\end{eqnarray}
where (a) follows by chain rule, (b) follows by the fact that all channels are statistically equivalent, and (c) follows by the fact that adding conditioning does not increase differential entropy. Dividing both sides by $N_1,$ we have
\begin{equation}
I(W_2; {{\bf Y}^{[1]}_1}^n | W_1, {\bf H}, {\bf Y}^{[1]^{\scriptstyle{n}}}_{2:N_1})
\leq
\frac{nr}{N_1} \log(P) + o(\log(P)). \label{eq:ic_per_antenna}
\end{equation}

Now we can upperbound $R_2$ as follows.
\begin{eqnarray}
n(R_2 - \epsilon_n)
& \stackrel{(a)}{\leq} &
I(W_2;{{\bf Y}^{[2]}}^n | {\bf H}) \\
& \leq &
I(W_2; {{\bf Y}^{[2]}}^n, W_1 | {\bf H}) \\
& = &
I(W_2; {{\bf Y}^{[2]}}^n | {\bf H}, W_1) \\
& = &
I(W_2; {\bf Y}^{[2]^{\scriptstyle n}}_{1:\min(M_2, N_2)}, {\bf Y}^{[2]^{\scriptstyle n}}_{\min(M_2, N_2)+1:N_2} | {\bf H}, W_1) \\
& = &
I(W_2; {\bf Y}^{[2]^{\scriptstyle n}}_{1:\min(M_2, N_2)} | {\bf H}, W_1) +
\underbrace{I(W_2; {\bf Y}^{[2]^{\scriptstyle n}}_{\min(M_2, N_2)+1:N_2} | {\bf H}, W_1, {\bf Y}^{[2]^{\scriptstyle n}}_{1:\min(M_2, N_2)})}_{o(\log(P))} \\
& = &
I(W_2; {\bf Y}^{[2]^{\scriptstyle n}}_{1:N_1}, {\bf Y}^{[2]^{\scriptstyle n}}_{N_1+1:\min(M_2, N_2)} | {\bf H}, W_1) + o(\log(P)) \\
& \stackrel{(b)}{=} &
I(W_2; {\bf Y}^{[1]^{\scriptstyle n}}, {\bf Y}^{[2]^{\scriptstyle n}}_{N_1+1:\min(M_2, N_2)} | {\bf H}, W_1) + o(\log(P)) \\
& = &
I(W_2; {\bf Y}^{[1]^{\scriptstyle n}} | {\bf H}, W_1) +
I(W_2; {\bf Y}^{[2]^{\scriptstyle n}}_{N_1+1:\min(M_2, N_2)} | {\bf H}, W_1, {\bf Y}^{[1]^{\scriptstyle n}}) + o(\log(P)), \label{eq:ic_nR2}
\end{eqnarray}
where $(a)$ follows by Fano's inequality and $(b)$ follows by the fact that
all channels are statistically equivalent.
The second term in (\ref{eq:ic_nR2}) can be upperbounded as follows.
\begin{eqnarray}
I(W_2; {\bf Y}^{[2]^{\scriptstyle n}}_{N_1+1:\min(M_2, N_2)} | {\bf H}, W_1, {\bf Y}^{[1]^{\scriptstyle n}}) & \stackrel{(a)}{\leq} &
I(W_2; {\bf Y}^{[2]^{\scriptstyle n}}_{N_1+1:\min(M_2, N_2)} | {\bf H}, W_1, {\bf Y}^{[1]^{\scriptstyle n}}_{2:N_1}) \\
& = &
\sum_{i=N_1+1}^{\min(M_2,N_2)}
I(W_2; {\bf Y}^{[2]^{\scriptstyle n}}_i | {\bf H}, W_1, {\bf Y}^{[1]^{\scriptstyle n}}_{2:N_1}, {\bf Y}^{[2]^{\scriptstyle n}}_{i+1:\min(M_2, N_2)}) \\
& \stackrel{(b)}{\leq} &
\sum_{i=N_1+1}^{\min(M_2,N_2)}
I(W_2; {\bf Y}^{[2]^{\scriptstyle n}}_i | {\bf H}, W_1, {\bf Y}^{[1]^{\scriptstyle n}}_{2:N_1}) \\
& \stackrel{(c)}{=} &
\sum_{i=N_1+1}^{\min(M_2,N_2)}
I(W_2; {\bf Y}^{[1]^{\scriptstyle n}}_1 | {\bf H}, W_1, {\bf Y}^{[1]^{\scriptstyle n}}_{2:N_1}) \\
& \stackrel{(d)}{\leq} &
(\min(M_2,N_2)- N_1) \frac{nr}{N_1} \log(P) + o(\log(P)) \label{eq:ic_nR2_T2},
\end{eqnarray}
where $(a)$ and $(b)$ follow by the fact that dropping conditioning does not increase differential entropy, $(c)$ follows by the fact that all channels are statistically equivalent, and $(d)$ follows by (\ref{eq:ic_per_antenna}).

Substituting (\ref{eq:ic_I_r}) and (\ref{eq:ic_nR2_T2}) into (\ref{eq:ic_nR2}), substituting (\ref{eq:ic_I_r}) into (\ref{eq:ic_nR1}), and letting $n$ goes to infinity, we have
\begin{eqnarray}
R_1 &\leq& (N_1 - r) \log(P) + o(\log(P)) \\
R_2 &\leq& \frac{\min(M_2, N_2)}{N_1} r \log(P) + o(\log(P)),
\end{eqnarray}
where
\begin{equation}
0 \leq r \leq N_1.
\end{equation}
This concludes the proof.

\subsection{Proof of Theorem \ref{thm:dofic2}}
The achievability for this case follows trivially by time division between the two users.

\section{Capacity Region of a Class of Broadcast Channels with No CSIT}
\label{sec:cpcty_bc}
\subsection{Models}
\label{subsec:bcmodel2}
Consider the $2$-user Gaussian MIMO broadcast channel where the transmitter is equipped with $M$ antennas and receivers $1,2$ are equipped with $N_1,N_2$ antennas, respectively. $M,$ $N_1,$ and $N_2$ are assumed to satisfy
\begin{eqnarray}
N_1 &\leq& M \\
N_2 &\leq& M.
\end{eqnarray}
The channel is described by the input-output relationship:
\begin{eqnarray}
{\bf Y}^{[1]}(t)&=&{\bf H}^{[1]}{\bf Q}(t){\bf X}(t)+{\bf Z}^{[1]}(t)\\
{\bf Y}^{[2]}(t)&=&{\bf H}^{[2]}{\bf Q}(t){\bf X}(t)+{\bf Z}^{[2]}(t)
\end{eqnarray}
where the notation usage for ${\bf X}^{[i]}(t)$ and ${\bf Y}^{[i]}(t)$, the assumption for the noise term ${\bf Z}^{[i]}(t),$ and the assumption that the channel is equipped with perfect CSIR and no CSIT are the same with those in Section \ref{sec:systemmodel}.
However, different from the previous assumption, ${\bf H^{[i]}}$ is assumed to be a time-invariant $N_i\times M$ channel matrix with $N_i$ orthonormal rows, $i\in\{1,2\}.$ Note that this is possible only when $N_1 \leq M$ and $N_2 \leq M$.
${\bf Q}$ is an $M \times M$ matrix whose elements are independent identically distributed circularly symmetric complex Gaussian random variables with zero mean and unit variance, implying that
\begin{eqnarray}
\mbox{E} [{\bf QQ}^\dag] = M {\bf I},
\end{eqnarray}
where {\bf I} is a $M \times M$ identity matrix. The transmit power constraint and the standard definition of the capacity region are the same with those in Section \ref{sec:systemmodel} and we omit them for brevity.

\subsection{Main Result}
\begin{theorem}\label{theorem:capcacitybc}
The capacity region of the MIMO BC with no CSIT, as defined in Section \ref{subsec:bcmodel2} is the following:
\begin{eqnarray}
{\bf C}=\left\{(R_1,R_2)\in\mathbb{R}_2^+: \frac{R_1}{N_1}+\frac{R_2}{N_2}\leq \log(1+P)\right\}.
\end{eqnarray}
\end{theorem}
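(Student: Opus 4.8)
The plan is to follow the same route as the proof of Theorem~\ref{thm:dofbc}, but to carry every estimate exactly rather than only to within $o(\log P)$. Since the claimed region is symmetric in the two users, I would assume without loss of generality that $N_1\le N_2$. The starting point is that, just as in Section~\ref{sec:prf}, this broadcast channel with no CSIT is stochastically degraded with receiver~$1$ as the weaker user: conditioned on ${\bf Q}$, the marginal channel to receiver~$i$ is ${\bf Y}^{[i]}=({\bf H}^{[i]}{\bf Q}){\bf X}+{\bf Z}^{[i]}$, and since ${\bf H}^{[i]}$ has orthonormal rows and ${\bf Q}$ has i.i.d.\ $\mathcal{CN}(0,1)$ entries, ${\bf H}^{[i]}{\bf Q}$ is an $N_i\times M$ matrix with i.i.d.\ $\mathcal{CN}(0,1)$ entries; hence receiver~$2$ can reproduce $({\bf Y}^{[1]},{\bf H}^{[1]}{\bf Q})$ in distribution from any $N_1$ of its coordinates. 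Consequently the capacity region is $\{(R_1,R_2):R_1\le I(U;{\bf Y}^{[1]}\,|\,{\bf Q}),\ R_2\le I({\bf X};{\bf Y}^{[2]}\,|\,U,{\bf Q})\}$ over Markov chains $U\to{\bf X}\to({\bf Y}^{[1]},{\bf Y}^{[2]})$ with ${\bf X},U$ independent of ${\bf Q}$ and $\mathrm{E}[\|{\bf X}\|^2]\le P$.

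For the converse I would establish two exact inequalities. (i) For any input, $I({\bf X};{\bf Y}^{[1]}\,|\,{\bf Q})\le N_1\log(1+P)$: this is the exact replacement for the ``$N_1$ receive antennas give at most $N_1$ DoF'' step. Conditioned on ${\bf Q}$ the output is Gaussian-dominated, so with ${\bf K}=\mathrm{Cov}({\bf X})$ we get $h({\bf Y}^{[1]}\,|\,{\bf Q})\le\mathrm{E}_{\bf Q}\log\det(\pi e({\bf I}+{\bf H}^{[1]}{\bf Q}{\bf K}{\bf Q}^\dag{\bf H}^{[1]\dag}))$, and Jensen's inequality together with $\mathrm{E}[{\bf Q}{\bf K}{\bf Q}^\dag]=\mathrm{tr}({\bf K}){\bf I}_M\preceq P{\bf I}_M$ and ${\bf H}^{[1]}{\bf H}^{[1]\dag}={\bf I}_{N_1}$ gives $h({\bf Y}^{[1]}\,|\,{\bf Q})\le N_1\log(\pi e(1+P))$. (ii) $\tfrac1{N_2}h({\bf Y}^{[2]}\,|\,U,{\bf Q})\le\tfrac1{N_1}h({\bf Y}^{[1]}\,|\,U,{\bf Q})$: setting $\psi(k)=h({\bf Y}^{[2]}_{1:k}\,|\,U,{\bf Q})$, the $N_2$ coordinates of ${\bf Y}^{[2]}$ are exchangeable given $(U,{\bf Q})$ because ${\bf H}^{[2]}$ has orthonormal rows (so the rows of ${\bf H}^{[2]}{\bf Q}$ are i.i.d.\ $\mathcal{CN}(0,{\bf I}_M)$), hence $\psi$ is concave with $\psi(0)=0$ and so $\psi(N_2)/N_2\le\psi(N_1)/N_1$; moreover $\psi(N_1)=h({\bf Y}^{[1]}\,|\,U,{\bf Q})$ since both effective channels are $N_1\times M$ i.i.d.\ Gaussian. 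Writing $\rho=I({\bf X};{\bf Y}^{[1]}\,|\,U,{\bf Q})$ and using $N_2\le M$ (so $\min(M,N_2)=N_2$), (i) gives $R_1\le N_1\log(1+P)-\rho$ and (ii) gives $R_2\le\tfrac{N_2}{N_1}\rho$, whence $\tfrac{R_1}{N_1}+\tfrac{R_2}{N_2}\le\log(1+P)$.

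The reason this argument goes through for arbitrary fixed ${\bf H}^{[1]},{\bf H}^{[2]}$ --- which is exactly what inserting the isotropic factor ${\bf Q}$ buys --- is that, unlike the DoF proof, it never needs the joint ``all channels are statistically equivalent'' claim: it uses only exchangeability of the output coordinates \emph{within} one receiver and equality in distribution of the $N_1$-antenna marginals \emph{across} the two receivers, both of which hold regardless of the relative geometry of the two row spaces. The step I expect to require the most care is verifying that the $o(\log P)$ slack scattered through the proof of Theorem~\ref{thm:dofbc} --- e.g.\ the ``redundant'' term in~(\ref{eq:redundant}), which vanishes here because $N_2\le M$, and the per-antenna bound --- really collapses to nothing once the degraded-BC expressions are kept exact, so that the outer bound tightens precisely to $\log(1+P)$.

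For achievability I would argue that time division between the two users meets this outer bound, which reduces the problem to the single-user channel ${\bf Y}^{[i]}={\bf H}^{[i]}{\bf Q}{\bf X}+{\bf Z}^{[i]}$; the point to confirm is that this channel supports rate $N_i\log(1+P)$ --- equivalently, that the chain of inequalities in the single-user converse is met with equality --- which is what closes the region.
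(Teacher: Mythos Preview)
Your converse follows the same route as the paper's: degraded-BC single-letterization, Jensen to bound $I({\bf X};{\bf Y}^{[1]}\mid{\bf Q})$ by $N_1\log(1+P)$, and an exchangeability/per-antenna argument to obtain $R_2\le\frac{N_2}{N_1}\rho$. Your concave-$\psi$ device is a clean repackaging of the paper's chain-rule bounds. One phrasing fix: the coordinates of ${\bf Y}^{[2]}$ are not exchangeable \emph{given a fixed realization of} ${\bf Q}$; what you actually use, and what holds, is that $h({\bf Y}^{[2]}_S\mid U,{\bf Q})$ depends only on $|S|$ after averaging over ${\bf Q}$, because the rows of ${\bf H}^{[2]}{\bf Q}$ are i.i.d.

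The genuine gap is in achievability, and the paper shares it. You correctly reduce the question to whether the single-user channel ${\bf Y}^{[i]}={\bf H}^{[i]}{\bf Q}{\bf X}+{\bf Z}^{[i]}$ supports rate $N_i\log(1+P)$, and you rightly observe this is equivalent to your Jensen step holding with equality. But it does not: with the (optimal, by isotropy) input covariance $\tfrac{P}{M}{\bf I}$, the single-user capacity is
$C_i=\mathrm{E}_{\bf Q}\bigl[\log\det\bigl({\bf I}+\tfrac{P}{M}{\bf H}^{[i]}{\bf Q}{\bf Q}^\dag{\bf H}^{[i]\dag}\bigr)\bigr]$,
and since ${\bf H}^{[i]}{\bf Q}{\bf Q}^\dag{\bf H}^{[i]\dag}$ is a non-degenerate Wishart matrix and $\log\det$ is strictly concave, $C_i<\log\det\bigl({\bf I}+P{\bf I}_{N_i}\bigr)=N_i\log(1+P)$. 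The paper's displayed step $\mathrm{E}_{\bf Q}\log|\cdots|=\log\bigl|{\bf I}+\tfrac{P}{M}{\bf H}^{[i]}\mathrm{E}[{\bf Q}{\bf Q}^\dag]{\bf H}^{[i]\dag}\bigr|$ passes the expectation through the log-determinant; this is valid only as the Jensen inequality $\le$, not as an equality. Hence time division does not reach the corners $(N_1\log(1+P),0)$ and $(0,N_2\log(1+P))$, and neither your proposal nor the paper's argument actually closes the region as stated.
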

\begin{proof}
The proof follows the similar lines in the proof of Theorem \ref{thm:dofbc} and we omit the parts that are the same with the those given in Section \ref{sec:prf} for brevity.
Without loss of generality, let us assume $N_1\leq N_2$. Following (\ref{eq:R_1bound}), we have
\begin{equation}
R_1
\leq
I({\bf X};{\bf Y}^{[1]}|{\bf H}^{[1]},{\bf H}^{[2]})-I({\bf X};{\bf Y}^{[1]}|{\bf H}^{[1]},{\bf H}^{[2]},U), \label{eq:R_1bound2}
\end{equation}
where $U\rightarrow{\bf X}\rightarrow({\bf Y}^{[1]},{\bf Y}^{[2]})$ forms a Markov chain. Denote the capacity of the point-to-point link from the transmitter to receiver $1$ as $C_1$ and let
\begin{eqnarray}
\gamma = I({\bf X};{\bf Y}^{[1]}|{\bf H}^{[1]},{\bf H}^{[2]},U).
\end{eqnarray}
Following (\ref{eq:R_1bound2}), we have
\begin{eqnarray}
R_1
&\leq&
C_1 - \gamma \nonumber \\
&=&
\mbox{E}_{\bf Q} \log \left| {\bf I} + {\bf H}^{[1]} {\bf Q} ({\frac{\scriptstyle P}{\scriptstyle M}} {\bf{I}}) {\bf Q}^\dag {{\bf H}^{[1]}}^\dag \right| - \gamma \nonumber \\
&=&
\log \left| {\bf I} + \frac{\scriptstyle P}{\scriptstyle M} {\bf H}^{[1]} \mbox{E}[{\bf QQ}^\dag] {{\bf H}^{[1]}}^\dag \right| - \gamma \nonumber \\
&=&
\log \left| {\bf I} + \frac{\scriptstyle P}{\scriptstyle M} {\bf H}^{[1]} (M{\bf I}) {{\bf H}^{[1]}}^\dag \right| - \gamma \nonumber \\
&=&
\log \left| {\bf I} + P {\bf H}^{[1]} {{\bf H}^{[1]}}^\dag \right| - \gamma \nonumber \\
&=&
N_1 \log(1+P) - \gamma.
\end{eqnarray}
Following (\ref{eq:preuseful}), we have following useful inequality:
\begin{equation}
\frac{1}{N_1}\ I({\bf X};{\bf Y}^{[1]}|{\bf H}^{[1]},{\bf H}^{[2]},U)
\geq
I({\bf X};{\bf Y}^{[1]}_1|{\bf H}^{[1]},{\bf H}^{[2]},U,{\bf Y}^{[1]}_{2:N_1}).\label{eq:usefulcapcity}
\end{equation}

Now, following (\ref{eq:R_2bound}), we can write the upperbound for $R_2$ as follows.
\begin{eqnarray}
R_2
& \leq &
I({\bf X}; {\bf Y}^{[2]} | {\bf H}^{[1]}, {\bf H}^{[2]},U) \\
& = &
I({\bf X}; {\bf Y}^{[2]}_{1:N_1}, {\bf Y}^{[2]}_{N_1+1:N_2} | {\bf H}^{[1]}, {\bf H}^{[2]},U) \\
& = &
I({\bf X}; {\bf Y}^{[1]}, {\bf Y}^{[2]}_{N_1+1:N_2} | {\bf H}^{[1]}, {\bf H}^{[2]},U) \\
& = &
I({\bf X}; {\bf Y}^{[1]} | {\bf H}^{[1]}, {\bf H}^{[2]},U) +
I({\bf X}; {\bf Y}^{[2]}_{N_1+1:N_2} | {\bf H}^{[1]}, {\bf H}^{[2]}, U, {\bf Y}^{[1]}) \\
& = &
\gamma +
\sum_{i=N_1+1}^{N_2} I({\bf X}; {\bf Y}^{[2]}_i | {\bf H}^{[1]}, {\bf H}^{[2]}, U, {\bf Y}^{[1]},
{\bf Y}^{[2]}_{i+1:N_2}) \\
& \leq &
\gamma + \sum_{i=N_1+1}^{N_2} I({\bf X}; {\bf Y}^{[2]}_i | {\bf H}^{[1]}, {\bf H}^{[2]}, U,
{\bf Y}^{[1]}_{2:N_1}) \\
& \leq &
\gamma + \frac{N_2 - N_1}{N_1} \gamma \\
& \leq &
\frac{N_2}{N_1} \gamma.
\end{eqnarray}
Thus, for $0\leq \gamma \leq N_1 \log(1+P)$, an outer bound on the boundary of the capacity region is characterized as follows.
\begin{eqnarray}
(R_1,R_2)=\left(N_1 \log(1+P) - \gamma,  \frac{N_2}{N_1} \gamma\right),
\end{eqnarray}
which implies that
\begin{eqnarray}
\mathbf{C}\subset\left\{(R_1,R_2)\in\mathbb{R}_2^+: \frac{d_1}{N_1}+\frac{d_2}{N_2}\leq \log(1+P)\right\}.
\end{eqnarray}
To provide the achievability of this outer bound, we first prove that $(0, N_2 \log(1+P))$ is achievable. Denote the capacity of the point-to-point link between the transmitter and receiver $2$ as $C_2,$ and we have
\begin{eqnarray}
C_2
&=&
\mbox{E}_{\bf Q} \log \left| {\bf I} + {\bf H}^{[2]} {\bf Q} ({\frac{\scriptstyle P}{\scriptstyle M}} {\bf{I}}) {\bf Q}^\dag {{\bf H}^{[2]}}^\dag \right| \nonumber \\
&=&
\log \left| {\bf I} + \frac{\scriptstyle P}{\scriptstyle M} {\bf H}^{[2]} \mbox{E}[{\bf QQ}^\dag] {{\bf H}^{[2]}}^\dag \right| \nonumber \\
&=&
\log \left| {\bf I} + \frac{\scriptstyle P}{\scriptstyle M} {\bf H}^{[2]} (M {\bf I}) {{\bf H}^{[2]}}^\dag \right| \nonumber \\
&=&
\log \left| {\bf I} + P {\bf H}^{[2]} {{\bf H}^{[2]}}^\dag \right| \nonumber \\
&=&
N_2 \log(1+P).
\end{eqnarray}
Thus, $(0, N_2 \log(1+P))$ is achievable.
Using time division between the two users, and the proof is complete.
\end{proof}

\section{Summary}
\label{sec:cnlsn}
In this paper, we explore the effect of the absence of channel state information for MIMO networks.
Throughout the paper, we assume perfect CSIR and no CSIT.
We provide the characterization of the DoF region of a 2-user MIMO broadcast channel.
We then provide a partial characterization of the DoF region of a 2-user MIMO interference channel.
The condition where the absence of channel state information results in the shrinkage of the DoF
region is identified.
We also extend the outer bound of the DoF region to find the capacity region for a specific 2-user
MIMO broadcast channel.


\bibliography{Thesis}

\end{document}